\newcommand{\subparagraph}{}
\newcommand{\bb}[1]{\mathbf{#1}}
\newcommand{\mm}[1]{\mathrm{#1}}
\newcommand{\Rmnum}[1]{\expandafter\@slowromancap\romannumeral #1@}
\newtheorem{theorem}{Theorem}
\begin{document}
\title{On the Capacity  of Massive MIMO  With 1-Bit ADCs and DACs at the Receiver and  at the Transmitter}
\author{Armin~Bazrafkan
       and Nikola~Zlatanov\thanks{Armin Bazrafkan and Nikola Zlatanov are with the Department of Electrical and Computer Systems Engineering, Monash University, Melbourne, VIC 3800, Australia (emails: armin.bazrafkan@monash.edu and nikola.zlatanov@monash.edu)}}

\date{}
\maketitle
\begin{abstract}
In this paper, we investigate the capacity  of massive  multiple-input multiple-output (MIMO) systems corrupted by complex-valued additive white Gaussian noise (AWGN) when both the transmitter and the receiver employ  1-bit digital-to-analog  converters (DACs) and 1-bit analog-to-digital-converters (ADCs).  As  a result of 1-bit DACs and ADCs, the transmitted and received symbols, as well as  the transmit- and receive-side  channel state information (CSI),  are   assumed to be quantized to 1-bit of information. The derived results show that the capacity of the considered massive MIMO system is  $2N$ and $2M$ bits per channel use     when $N$ is fixed and $M\to\infty$ and when $M$ is fixed and $N\to\infty$, respectively,
where $M$ and $N$ denote  the number of transmit and receive antennas, respectively.   These coincide with the respective capacities with full CSI at both the transmitter and the receiver. 
In both cases,  we showed that the derived capacities can be achieved with noisy 1-bit CSI at the transmitter-end or at the receiver-end,  and without any CSI at the other end.
  Moreover, we showed that the capacity can be achieved  in one channel use without employing channel coding, which results in a latency of one channel use.
\end{abstract}

 \section{Introduction}
Vast available spectrum resources have made millimeter-wave (mm-wave) communications a key technology for the development of 5G wireless systems \cite{ref47}. Experimental measurements show that mm-wave communication links can provide multi-gigabit rates with low latency \cite{ref48}. However, mm-wave signals suffer from large propagation loss and are more prone to blockage than  conventional micro-wave signals \cite{ref6}. In order to mitigate the high path-loss of mm-wave signals, large antenna arrays at the transmitter and at the receiver are necessary for directional beamforming  \cite{ref7}. For example, Huawei has already launched its massive MIMO technology in China that  adopts large scale antenna arrays that are able to move both vertically and horizontally for 3D beamforming. This technology is able to achieve 1.4 Gbps in the 3.5 GHz band by adopting 40 MHz channels \cite{ref49}.  

Fortunately, due to the small wavelength of mm-wave signals and advances in radio frequency (RF) circuits, hundreds of antennas can be placed in a few square centimeters, resulting in mm-wave massive multiple-input multiple-output (MIMO) \cite{ref8}, \cite{ref9}. Among the main design problems for the massive MIMO technology is the  power consumption of analog-to-digital converters (ADCs) and digital-to-analog converters (DACs) associated with each RF chain.  Specifically,  due to   large bandwidths in mm-wave MIMO, the sampling rates of the ADCs and DACs have to increase, which results in  high-resolution, high-speed ADCs and DACs that consume huge amounts of energy and are costly \cite{ref11}, \cite{ref12},  \cite{ref13}. Since for standard deployments an RF chain is connected to each antenna, this issue limits the number of antennas at  the transmitter and the receiver. 
\subsection{Low-Resolution DACs and ADCs}
Generally, there are two approaches to reduce the power consumption of ADCs/DACs. With the first approach,  the sampling rate of the ADCs and DACs is reduced by making a number of parallel low-speed converters  act as a high-speed one \cite{ref50}. However, this typically results in a mismatch in gain, timing, and voltage among the sub-levels of the parallel ADCs/DACs, which leads to error floors in link  performance \cite{ref11}, \cite{ref14}, \cite{ref15}.  In the second approach, instead of reducing the sampling rate of the  ADCs/DACs, their resolutions are reduced. Since the power consumption of ADCs/DACs grows exponentially with the resolution \cite{ref15}, ADCs/DACs with resolution of 1-bit have considerably lower power consumption compared to typical high-resolution ADCs/DACs with 10-12 bits \cite{ref14}. Furthermore,  since 1-bit ADCs/DACs do not need automatic gain control,  they can also significantly reduce the system complexity  \cite{ref11}, \cite{ref18}.  However,  due to the low resolution of these ADCs/DACs, drastic non-linearities are added  to the system; and consequently,  the shapes of the transmitted/received waveforms cannot be preserved at the transmitter/receiver \cite{ref17}. Therefore, it is important to investigate how  1-bit ADCs/DACs  influence the capacity of massive MIMO systems.  Motivated by this,  in this paper, we investigate the capacity of the additive white Gaussian noise (AWGN) massive  MIMO systems  with  1-bit ADCs and DACs at the receiver and  at the transmitter.
\subsection{Related Works}
  One of the earliest works on the problem of finding the capacity of MIMO systems, where low resolution ADCs is \cite{ref18}. In \cite{ref18}, the capacity of a MIMO channel with 1-bit ADCs at the receiver is studied at low signal-to-noise ratios (SNRs) under the assumption of     perfect   channel state information (CSI) at the receiver  (CSIR). This problem shows that, at low SNRs,  the  capacity is smaller  by a factor of $2/\pi$, which corresponds to a gap of -1.96 dB in $E_b/N_0$, as compared to a system with infinite resolution.   In \cite{ref21}, the results in \cite{ref18} have been extended to the case where the additive Gaussian noise is mutually correlated across the receive antennas. Bussgang decomposition has been used in \cite{ref21} to model the corresponding MIMO channel and it has been demonstrated that certain conditions on the  channel and the noise covariance matrices result in lower performance loss compared to the case of uncorrelated noise.  In \cite{ref44} and \cite{ref45}, achievable rates have been presented for MIMO systems with 1-bit ADCs at the transmitter and full-precision DACs at the transmitter, where the antenna outputs are processed by analog combiners, and full CSI is available both at the transmitter and the receiver. 

In \cite{ref23}, it has been shown that at high SNRs,  under the assumption of perfect CSIT and CSIR, the capacity of MIMO systems with high-resolution inputs and 1-bit quantized outputs   is lower bounded by the rank of the channel matrix. On the other hand,  in \cite{ref11}, when the channel matrix has full-row rank, a tight upper bound on the capacity has been derived for the MIMO systems at finite SNRs. In \cite{ref11}, the results of \cite{ref23} have been extended, where under the assumption of full CSIT and CSIR, the  capacity of  MIMO  systems with 1-bit quantization at the receiver only has  been derived for infinite SNR. In \cite{ref24}, an achievable rate  has been derived for a massive MIMO system with high resolution DACs at the transmitter and  1-bit ADCs at the receiver under the assumption of imperfect CSIT, where the wideband frequency-selective channel is  being estimated by a linear low-complexity algorithm.  The authors in \cite{ref25} investigated the achievable  rate of a massive MIMO system where low resolution ADCs are used at the receiver only and perfect CSIT is available, where it has been shown that the performance loss caused by using low resolution ADCs can be compensated by increasing the number of antennas at the receiver.

In \cite{ref19}, the authors analyze the mutual information of a MIMO system with high resolution DACs at the transmitter and 1-bit ADCs at the receiver,  where   CSI is not available  at the transmitter (CSIT) nor at the receiver. Achievable rates in \cite{ref19} are provided only for the quantized single-input single-output (SISO) channels, where on-off QPSK signaling is shown to be capacity achieving. In \cite{ref20}, the authors analyze the mutual information of a MIMO system with high-resolution DACs at the transmitter and 1-bit ADCs at the receiver   is derived in the low SNR regime under the assumption of  no CSIT/CSIR. Achievable rates  in \cite{ref20} are provided only for the SIMO channel and only in the asymptotic case of low SNRs.

\subsection{Main Contributions}
In   prior works, the assumption of  perfect CSI has been leveraged for deriving the achievable rates and/or capacity results for MIMO systems\footnote{The achievable rate in \cite{ref19} is for SISO and in \cite{ref20} for SIMO, but in the asymptotic low-SNR regime only.}. Although different algorithms have been proposed for estimating the channel while 1-bit ADCs are used at the receiver, they are associated with large estimation  errors, high computational complexity, and require extremely long training sequences \cite{ref33}\nocite{ref34}\nocite{ref35}\nocite{ref36}\nocite{ref37}\nocite{ref38}\nocite{ref39}-\cite{ref40}.
 Therefore, the assumption of  having full CSI when 1-bit ADCs/DACs are present, is unrealistic.  To the best of our knowledge, this is the first work on MIMO systems with 1-bit DACs/ADCs at the transmitter and at the receiver, that adopts the practical assumption of  noisy   1-bit   CSI  at the transmitter and/or at the receiver. 
 
 In this paper, we investigate two system models of massive MIMO, one when the number of  transmit antennas, denoted by $M$, is very large and goes to infinity, and the second when the number of receive antennas, denoted by $N$, is very large and goes to infinity. Next, the results are derived for  complex-valued AWGN MIMO channels.  The derived results show that the capacity of the considered massive MIMO system is  $2N$ and $2M$ bits per channel use     when $N$ is fixed and $M\to\infty$ and when $M$ is fixed and $M\to\infty$ hold, respectively. These coincide with the respective capacities with full CSI at both transmitter and receiver. 
In both cases,  we showed that the derived capacities can be achieved with noisy 1-bit CSI at the transmitter-end or at the receiver-end,  and without any CSI at the other end.
  Moreover, we showed that the capacity can be achieved  in one channel use without employing channel coding, which results in a latency of one channel use. Therefore, massive MIMO systems with 1-bit DACs/ADCs maybe a practical approach for achieving ultra reliable low latency communication (URLLC).

This paper is organized as follows. In Section \Rmnum {2}, the system model of a MIMO system with 1-bit quantization at both the transmitter and the receiver is described. In Section \Rmnum 3, the capacities of  massive MIMO systems with 1-bit quantization at both the transmitter and the receiver are presented.   Finally, conclusion is brought  in Section \Rmnum 4.

%\subsection{Notation}
% Upper-case and lower-case bold-face letters denote matrices and row vectors, respectively. $\mm E[\cdot]$ denotes statistical expectation,  $\mathrm {Pr}(\cdot)$ denotes probability, and $\mathrm p_{X|Y}(x|y)$ denotes the conditional probability mass function (PMF). Moreover, $(\cdot)^{\mm T}$ is the transpose, $|\cdot|$ is the absolute,  and  $\textrm{sign}(\cdot)$ is the sign function, which is applied element-wise to vectors and matrices. Random variables are denoted by normal lower-case letters, while random vectors and random matrices are denoted by boldface lower-case and boldface upper-case letters, respectively.  For example,   $ x$, $\bb x$, and $\bb X$ denote a random variable, random vector, and random matrix, respectively. Moreover, subscripts $R$ in $x_R$ and $I$ in $x_I$ denote real and imaginary parts of $x$, respectively. Finally,   $\textrm{sign}$($\cdot$) operates element-wise on vectors and matrices, and is defined as

\section{System Model}
 We consider a MIMO system comprised of $M$ transmit and $N$ receive antennas. Each antenna element is equipped with two\footnote{The transmitter requires 1-bit DAC and 1-bit ADC   in order to transmit information symbols  and    receive pilot symbols in a Time Division Duplex (TDD) manner, respectively. In contrast, the receiver requires 1-bit ADC and 1-bit DAC   in order to receive information symbols  and    transmit pilot symbols in a TDD manner, respectively.} 1-bit quantizers, 1-bit ADC and  1-bit DAC,  that  quantize the received and transmitted signals, respectively, as shown in Fig. \ref{fig.1}. Let $\bb x\in \mathcal X^M$ denote  the complex-valued $M \times 1$ transmit vector after the 1-bit quantization at the transmitter. We assume that each element of $\bb x$ has unit energy. Let  $P$ be the total transmit power and  let $\bb y$ denote the  $N\times 1$ received vector before the 1-bit quantization at the receiver. Then,  $\bb y$ is given by  
\begin{equation}\label{eq.1}
  \bb y=\sqrt{\frac{P}{M}}\bb H \bb x+\bb w,
\end{equation}
where  $\bb H$ denotes the   $N\times M$ MIMO complex-valued channel matrix and $\bb w$ is the  $N \times 1$ complex-valued Gaussian noise vector with independent and identically distributed (i.i.d.) entries having zero mean and unit variance. Following standard Rayleigh fading, the channel matrix $\bb H$ is also assumed to have complex-valued Gaussian i.i.d. entries with zero mean and unit variance.  
The vector $\bb y$ undergoes 1-bit quantization at the receiver, yielding the $N\times 1$ quantized received vector $\bb z\in\{1+j,1-j,-1+j,-1-j\}^N$, given by
\begin{equation}\label{eq.2}
\bb z=\textrm {sign}(\bb y)=\textrm {sign}\left(\sqrt{\frac{P}{M}}\bb H\bb x+\bb w\right),
\end{equation}
 where 
 \begin{equation}\label{eq.2.1}
  \textrm{sign}(a+jb)=
 \left\{ \begin{array}{rl}
                     1+j & \mbox{if } a\geq 0 \textrm{ and } b\geq 0 \\
                     -1+j & \mbox{if }   a<0 \textrm{ and } b\geq 0\\
                      1-j & \mbox{if }   a\geq 0 \textrm{ and } b<0\\
                      -1-j & \mbox{if }   a< 0 \textrm{ and } b<0.\\
                   \end{array}
                   \right.
\end{equation}

\begin{figure}[t]
  \centering
  \includegraphics[width=15cm]{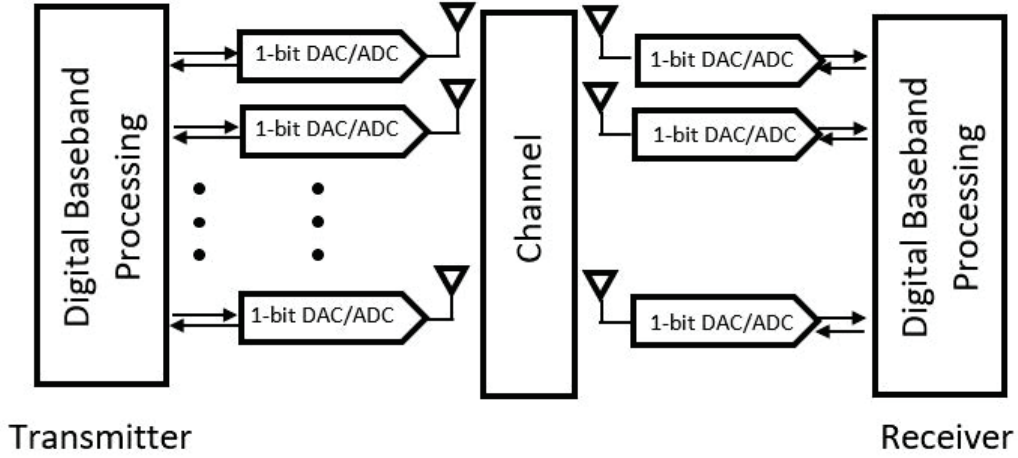}
  \caption{An $M \times N$ MIMO system with 1-bit quantization at the transmitter and the receiver}\label{fig.1}
\end{figure}

   Due to the 1-bit ADCs/DACs at both the receiver and the transmitter,  the transmitter and the receiver can have access only to 1-bit CSI corrupted by noise. More precisely, let $\bb G$ denote the noisy 1-bit estimate of the   channel matrix $\bb H$, which is  obtained by $K$ pilot transmissions per  antenna, given by 
 \begin{equation}\label{eq.3}
 \bb G=\textrm {sign}\left(\sum_{k=1}^K\textrm {sign}\left(  \sqrt{P_p}  \bb H\bb +\bb W_k\right)\right),
 \end{equation}
where $\bb W_k$ is an $N\times M$ noise matrix with i.i.d. zero-mean unit variance  complex-valued Gaussian elements, and   $P_p$ is the pilot power. In practice, assuming channel reciprocity, the noisy 1-bit CSI channel matrix $\mathbf G$ can be obtained at the transmitter (receiver)  by  sending $K$ orthogonal training symbols per antenna  from the receiver (transmitter) to the transmitter (receiver) in a TDD fashion. The transmitter (receiver) then collects the $K$ 1-bit quantized pilot signals received on each of its antennas,  then makes a 1-bit estimate on the channel based on majority rule and thereby obtains $\bb G$ in \eqref{eq.3}.

The  capacity of the considered MIMO system with 1-bit quantized inputs and outputs, and  noisy 1-bit CSI is given as
 \begin{equation}\label{eq.4}
   C=\underset{\mm p(\bb x|\bb G)}{\text{max}}\hspace{4mm}\mm I(\bb z;\bb x|\bb G),
 \end{equation}
 where $\mm p(\bb x|\bb G)$ is the  probability mass function (PMF) of  the transmit signal $\bb x\in\mathcal X^M$ given the available 1-bit noisy CSI matrix $\bb G$. In this paper, we derive the capacity in \eqref{eq.4} by focusing on the massive MIMO regime,  where the number of either transmit or receive antennas goes to infinity, i.e., either $M\to\infty$ and $N<\infty$ is fixed or $N\to\infty$ and $M<\infty$ is fixed. To this end, we assume that $P>0$ and $P_p>0$ hold.

\section{Capacity}
In this section, the capacities of massive MIMO systems with 1-bit quantized inputs and outputs,  and noisy 1-bit CSI  are presented for the  cases when the  massive antenna array is at the  transmit-side and at receive-side. 
\subsection{Massive Antenna Array At The Transmit-Side}
 The following theorem establishes the capacity when $M\to\infty$ and $N$ is fixed. 
\begin{theorem}\label{theo.4}
  The capacity of the  massive MIMO system with 1-bit quantized inputs and outputs,  and  noisy 1-bit CSI   satisfies the limit
  \begin{equation}\label{eq.cmimo_complex}
  \underset{M\rightarrow\infty}{\text{$\lim$}}\hspace{2.5mm}C=2N,
  \end{equation}
  where $N<\infty$ is fixed.
 The capacity can be achieved in one channel use by a scheme that uses the noisy 1-bit CSI available at the transmitter only, neglecting any CSI at the receiver. 
\end{theorem}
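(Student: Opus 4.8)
The plan is to prove $\lim_{M\to\infty}C=2N$ by a matching converse and achievability argument. The converse is immediate and holds for every $M$: since $\bb z$ lives in the $4^{N}$‑point alphabet $\{1+j,1-j,-1+j,-1-j\}^{N}$, we have $\mm I(\bb z;\bb x\mid\bb G)\le H(\bb z\mid\bb G)\le\log_{2}4^{N}=2N$ for every input PMF, so $C\le 2N$. Everything else is the achievability of $2N$ bits in a single channel use using only the transmit‑side estimate $\bb G$.

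The achievability scheme I would use is ``blockwise beamforming by conjugating the $1$‑bit CSI''. Draw the message $\bb s=(s_{1},\dots,s_{N})$ uniformly on $\{1+j,1-j,-1+j,-1-j\}^{N}$ (thus $2N$ bits), partition the $M$ transmit antennas into $2N$ blocks $B_{1,\mm R},B_{1,\mm I},\dots,B_{N,\mm R},B_{N,\mm I}$ each of size $\lfloor M/(2N)\rfloor$, and put $x_{m}=\mm{Re}(s_{n})\,\overline{G_{nm}}/\sqrt{2}$ for $m\in B_{n,\mm R}$ and $x_{m}=j\,\mm{Im}(s_{n})\,\overline{G_{nm}}/\sqrt{2}$ for $m\in B_{n,\mm I}$; each $x_{m}$ is a unit‑energy QPSK symbol, so the $1$‑bit DAC and power constraints are respected. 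The receiver simply sets $\widehat{\bb s}=\bb z$ and uses no CSI. Heuristically, block $B_{n,\mm R}$ (resp.\ $B_{n,\mm I}$) steers $\mm{Re}(y_{n})$ (resp.\ $\mm{Im}(y_{n})$) far into the half‑plane picked by $\mm{Re}(s_{n})$ (resp.\ $\mm{Im}(s_{n})$) while contributing only a bounded residual to every real/imaginary component of every $y_{n}$.

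The technical core is one expectation. Conditioning on $\bb H$ and using that the real and imaginary parts of $\bb H$ and of the $\bb W_{k}$'s are mutually independent, $\mathbb E[G_{nm}\mid H_{nm}]=\psi(\mm{Re}(H_{nm}))+j\,\psi(\mm{Im}(H_{nm}))$, where $\psi$ is an odd, bounded function with $\mm{sign}(\psi(t))=\mm{sign}(t)$ for $t\ne0$ (this is where $P_{p}>0$ enters). Hence $\mathbb E[\mm{Re}(H_{nm}\overline{G_{nm}})]=2\,\mathbb E[t\,\psi(t)]=:2\alpha>0$ while $\mathbb E[\mm{Im}(H_{nm}\overline{G_{nm}})]=0$, and $\mathbb E[H_{n'm}\overline{G_{nm}}]=0$ for $n'\ne n$ since distinct rows of $\bb H$ are independent. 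Decomposing $\mm{Re}(y_{n})$ into the $B_{n,\mm R}$‑contribution plus everything else, the law of large numbers gives the first piece as $\mm{Re}(s_{n})\cdot\frac{\alpha}{\sqrt{2}\,N}\sqrt{PM}\,(1+o(1))$, whereas the remaining $2N-1$ blocks together with $\mm{Re}(w_{n})$ form a sum of independent zero‑mean terms whose variance is bounded uniformly in $M$ (each of the $O(M)$ summands has magnitude at most $\sqrt{P/M}\,|H_{nm}|$ and the entries of $\bb H$ have unit second moment). Since $N$ and $P$ are fixed, the signal term dominates in probability, so $\mm{sign}(\mm{Re}(y_{n}))=\mm{Re}(s_{n})$ with probability $\to1$; the identical argument applies to $\mm{Im}(y_{n})$, and a union bound over the $2N$ components yields $\delta_{M}:=\mathbb P(\bb z\ne\bb s)\to0$ as $M\to\infty$.

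Finally I would convert this bound on error probability into a bound on the mutual information in \eqref{eq.4}. Because $\bb s$ is uniform and independent of $\bb G$, $\bb x$ is a deterministic function of $(\bb s,\bb G)$ from which $\bb s$ can be recovered, and $\bb z$ is conditionally independent of $\bb s$ given $(\bb x,\bb G)$, the input PMF induced by the scheme satisfies $H(\bb x\mid\bb G)=2N$ and $\mm I(\bb z;\bb x\mid\bb G)=\mm I(\bb z;\bb s\mid\bb G)=2N-H(\bb s\mid\bb z,\bb G)$. A standard Fano‑type estimate using the error indicator gives $H(\bb s\mid\bb z,\bb G)\le h_{b}(\delta_{M})+2N\,\delta_{M}$ with $h_{b}$ the binary entropy function, so $C\ge 2N-h_{b}(\delta_{M})-2N\,\delta_{M}$; combined with $C\le 2N$ this gives $\lim_{M\to\infty}C=2N$, attained in one channel use with transmit‑side $1$‑bit CSI only. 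I expect the main obstacle to be the interference control in the previous paragraph: a single transmit vector must steer all $N$ received symbols at once, so one has to show that \emph{every} contribution to a given $y_{n}$ other than its own aligned block stays $O(1)$ in probability while the aligned block grows like $\sqrt{M}$ --- precisely the point where the independence of the rows of $\bb H$ and the finiteness of $\mathbb E|H_{nm}|^{2}$ are indispensable.
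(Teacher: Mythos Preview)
Your proof is correct and follows essentially the same approach as the paper: an identical converse via $H(\bb z\mid\bb G)\le 2N$, and achievability by partitioning the transmit array into groups that conjugate-beamform the noisy $1$-bit CSI toward their target receive antenna, then showing via the law of large numbers that the aligned contribution scales like $\sqrt{M}$ while all cross terms plus noise have uniformly bounded variance. The differences are cosmetic rather than structural: the paper uses $N$ groups of size $M/N$ with the single precoder $x_m=\tfrac12\,\overline{g_{nm}}\,s_n\in\{\pm1,\pm j\}$ (so one antenna simultaneously serves both quadratures of $y_n$), whereas you use $2N$ groups with a QPSK alphabet; and the paper carries out a more explicit case-by-case error bound (conditioning on the number of wrong CSI bits and splitting the binomial sum) where you invoke a cleaner mean/variance separation. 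Your explicit Fano step linking $\mathbb P(\bb z\ne\bb s)\to0$ to $\mm I(\bb z;\bb x\mid\bb G)\to 2N$ is a welcome piece of rigor that the paper leaves implicit.
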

\begin{proof}
  See  Appendix \ref{app.1} for the proof.
\end{proof}
 We now demonstrate a scheme that is asymptotically able to communicate at this rate with vanishing probability of error by using  the  noisy 1-bit CSI only at the transmitter. 

\textbf{Achievability Scheme 1:} 
 In order to communicate $2N$ bits per channel use, the transmitter constructs a codebook comprised of $2^{2N}$ codewords $\bb s_i\in\mathcal S^{N}$, for $i=1,\cdots,2^{2N}$, where $\mathcal S= \{-1-j,-1+j,1-j,1+j\}$.   Without loss of generality, assume that   codeword $\bb s$ is selected to be transmitted in the considered channel use. Let $s_{n}^R\in\{-1,1\}$ and $s_{n}^I\in\{-1,1\}$ denote the  real-valued and imaginary-valued parts of the $n$-th complex-valued element of $\bb s$, respectively,  for $n=1,2,...,N$. Next, assume that $\bb G$, given by \eqref{eq.3}, is known at the transmitter. Let $g_{nm}^R\in\{-1,1\}$ and $ g^I_{nm}\in\{-1,1\}$ denote the real-valued and imaginary-valued parts of the $(n,m)$  element of   $\bb G$. 
   The transmit vector $\bb x\in\mathcal X^M$, where $\mathcal X=\{-1,1,-j,j\}$, is formed from   $\bb s$ and $\bb G$ as follows.  The vector $\bb x$  is divided into $N$ parts,  each  comprised of $M/N$ consecutive complex-valued symbols. Let $x_m^R\in\{-1,0,1\}$ and $x_m^I\in\{-1,0,1\}$ denote the  real-valued and imaginary-valued parts of the $m$-th complex-valued element of $\bb x$. Assume that $x_m^R$ and $x_m^I$ belong to the $n$-th group of transmit antennas,    for $n=1,2,...,N$. Then, $x_m^R$ and $x_m^I$  are constructed as
\begin{align}\label{aa1}
\begin{bmatrix} 
x^R_m \\
x^I_m 
\end{bmatrix}
=
 \frac{1}{2}   \begin{bmatrix} 
g^R_{nm} & g^I_{nm} \\
-g^I_{nm} & g^R_{nm} 
\end{bmatrix}
\begin{bmatrix} 
s^R_n \\
s^I_n 
\end{bmatrix} .
\end{align}
Note that $x_m^R$ and $x_m^I$, constructed using (\ref{aa1}), are such that either $x_m^R$ or $x_m^I$ is always zero while the other element is always $1$ or $-1$. Thereby, in a given channel use, the $m$-th transmit antenna is always silent on either the real-valued or the imaginary-valued channel/carrier.
Next,  $x_m^R$ and $x_m^I$ are amplified by $\sqrt{\frac{P}{M}}$, and then transmitted in one channel use from the $m$-th transmit antenna over the real-valued and imaginary-valued channels, respectively,   for $m=1,...,M$. The
  receiver receives $\bb z$ and it will then  decide that  $ \bb z$ has been the transmitted codeword in the given channel use. As a result, an error happens at the receiver if $\bb z\neq\bb s$ occurs. In Appendix \ref{app.1}, we prove that the error rate goes to zero as $M\to\infty$ for fixed $N<\infty$.

Achievability Scheme 1 works by splitting the $M\to\infty$ antennas at the transmitter into $N$ groups of antennas, where each group is comprised of $M/N\to\infty$ antennas. Then, the $n$-th group of transmit antennas, for $n=1,2,...,N$, uses the 1-bit CSI vector, obtained from the pilots by the $n$-th receive antenna, to beamform towards the direction of the $n$-th receive antenna. Thereby, for $M/N\to\infty$, the beamformed signal from the $n$-th group of transmit antennas is received amplified at the    $n$-th receive antenna and completely attenuated  at any other receive antennas. Hence, Achievability Scheme 1  resolves the   considered MIMO system   into $N$ parallel complex-valued Gaussian channels with 1-bit quantized inputs and outputs, where each parallel channel has an SNR that satisfies  $SNR\to \infty$  as $M/N\to\infty$. As a result, instantaneous error-free decoding is feasible.

%By comparing the upper bounds of the BERs for noisy and noiseless 1-bit CSI, given by   \eqref{e4} and   \eqref{eq111.1}, respectively, it can be seen that the MIMO systems with noisy 1-bit CSI  and with noiseless 1-bit CSI have the same upper bound of the BER if the MIMO system with noisy 1-bit CSI has $1/(1-2p_{\epsilon})^2$ times larger number of transmit antennas than  the MIMO system with noiseless 1-bit CSI. 

\subsection{Massive Antenna Array At The Receive-Side}
We now turn to the opposite case when the massive antenna array is at the receiver's side.
The following theorem provides the capacity of this massive MIMO system for the asymptotic case when $N\to\infty$ holds and $M$ is fixed.
\begin{theorem}\label{theo.5}
   The capacity of a  massive MIMO system with 1-bit quantized inputs and outputs,  and   noisy 1-bit CSI   satisfies the limit
  \begin{equation}\label{eq.cmimo_c}
  \underset{N \rightarrow\infty}{\text{$\lim$}}\hspace{2.5mm}C=2M,
  \end{equation}
  where $M<\infty$ is fixed.
   The capacity can be achieved in one channel use by a scheme that  uses the noisy 1-bit CSI available at the receiver only, neglecting any CSI at the transmitter. 
\end{theorem}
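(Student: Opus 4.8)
The plan is to prove $\lim_{N\to\infty}C=2M$ through the two matching bounds, the converse $C\le 2M$ being immediate and the achievability coming from the receive-side counterpart of Achievability Scheme~1. For the converse, since the transmit vector $\bb x$ lies in $\mathcal X^M$ with $|\mathcal X|=4$, every input PMF $\mm p(\bb x\,|\,\bb G)$ satisfies $\mm I(\bb z;\bb x\,|\,\bb G)\le H(\bb x\,|\,\bb G)\le H(\bb x)\le M\log_2 4 = 2M$, so $C\le 2M$ for \emph{every} $N$; no limit is needed here.

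For achievability I would use the following scheme, which I would record as Achievability Scheme~2. The transmitter, which has no CSI, maps its $2M$ information bits directly to a vector $\bb x\in\mathcal X^M$ (two bits per antenna) and sends it, scaled by $\sqrt{P/M}$, in a single channel use without channel coding. The structural observation is that, because the rows of $\bb H$, the rows of each noise matrix $\bb W_k$, and the entries of $\bb w$ are i.i.d.\ across the $N$ receive antennas, the pairs $(z_n,\bb g_n)$ — with $\bb g_n$ the $n$-th row of $\bb G$ — are, conditioned on the transmitted $\bb x$, i.i.d.\ across $n=1,\dots,N$ with a common joint PMF $Q_{\bb x}$ on $\mathcal S\times\mathcal S^M$ that is a known function of $\bb x$ and of $P,P_p,K$. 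The receiver therefore observes $N$ i.i.d.\ draws from one of the $2^{2M}$ distributions $\{Q_{\bb x}\}$; it forms the empirical joint PMF $\ww Q$ of $\{(z_n,\bb g_n)\}_{n=1}^N$ and outputs the $\bb x$ whose $Q_{\bb x}$ is closest to $\ww Q$ (a minimum-distance, typicality, or maximum-likelihood decoder over a finite hypothesis set; equivalently, one may bin the receive antennas by the value of $\bb g_n$ and take within each bin a de-rotated majority vote on $z_n$). By the strong law of large numbers $\ww Q\to Q_{\bb x}$ almost surely, so provided the $2^{2M}$ distributions are pairwise distinct the decoding error probability vanishes as $N\to\infty$; together with the converse this gives the theorem, and the latency is one channel use, as claimed.

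The main obstacle is therefore the distinctness lemma: $\bb x\mapsto Q_{\bb x}$ is injective. I would prove it by reading each symbol of $\bb x$ off a linear functional of $Q_{\bb x}$. Writing $z_n^R$ for the real part of $z_n$ and averaging out the receive noise through the odd, strictly increasing function $\psi(t):=E[\,\textrm{sign}(t+w_n^R)\,]$, one has $2\Pr[z_n^R=1\,|\,\bb g_n=\gamma,\bb x]-1=E\big[\psi\big(\sqrt{P/M}\textstyle\sum_m(h_{nm}^R x_m^R-h_{nm}^I x_m^I)\big)\,\big|\,\bb g_n=\gamma\big]$. For $P_p>0$ the coordinates of $\bb h_n$ are conditionally independent given $\bb g_n$, symmetric, and satisfy a monotone-likelihood-ratio property, so pairing each quantizer pattern $\gamma$ with the pattern $\gamma^{(m)}$ obtained by flipping its $(m,\mathrm{real})$ coordinate — patterns of equal probability whose conditional laws of $\bb h_n$ differ only in the law of $h_{nm}^R$, with the former stochastically larger — and noting that $\psi\big(\sqrt{P/M}(h_{nm}^R x_m^R+\mathrm{rest})\big)$ is increasing, decreasing, or constant in $h_{nm}^R$ according to the sign of $x_m^R$, each paired contribution to $\sum_\gamma\Pr[\bb g_n=\gamma]\,\gamma_m^R\big(2\Pr[z_n^R=1\,|\,\bb g_n=\gamma,\bb x]-1\big)$ has the sign of $x_m^R$, hence so does the whole sum, regardless of the other components of $\bb x$. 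Its sign (and its vanishing when $x_m^R=0$) thus recovers $x_m^R$, and the analogous functional built from $\gamma_m^I$ and the imaginary part of $z_n$ recovers $x_m^I$, so $Q_{\bb x}$ determines $\bb x$. The step needing the most care is verifying the stochastic-dominance / MLR property of the posterior of $h_{nm}$ given $g_{nm}$ (a short but necessary computation using $P_p>0$), after which the monotonicity argument delivers injectivity for all $P/M>0$ rather than merely to first order in the SNR.
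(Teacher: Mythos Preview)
Your converse is exactly the paper's. For achievability, your argument is correct but organized differently from the paper's. The paper does not pass through the hypothesis-testing formulation or an injectivity lemma: it writes down the explicit receiver
\[
\hat s_m^R=\mm{sign}\Big(\sum_{n=1}^N g_{nm}^R z_n^R\Big),\qquad
\hat s_m^I=\mm{sign}\Big(\sum_{n=1}^N g_{nm}^I z_n^I\Big),
\]
then conditions on the number $j$ of sign-errors in the pilot estimates $g_{n1}^R$, treats the remaining cross-terms as Gaussian noise, applies the law of large numbers to $\frac1N\sum_n g_{n1}^R z_n^R$, and bounds the tail $\sum_{j\ge N/2}\binom{N}{j}p_\epsilon^j(1-p_\epsilon)^{N-j}$ by $\tfrac N2(4p_\epsilon(1-p_\epsilon))^{N/2}\to0$. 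So the paper's proof is a direct error analysis of one linear statistic, with the Gaussian structure used explicitly.

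Your injectivity route is a legitimate alternative: the linear functional you use to separate the $Q_{\bb x}$, namely $E[g_{nm}^R z_n^R\,|\,\bb x]$, is precisely the limit of the paper's normalized test statistic, so the two proofs share the same core computation. What you gain is generality—your coupling/MLR argument needs only that $\psi$ is odd and strictly increasing and that the posterior of $h_{nm}^R$ given $g_{nm}^R$ is stochastically ordered, so it would survive non-Gaussian noise; what the paper gains is an explicit, one-line decoder (your ``de-rotated majority vote'' remark is essentially this) together with an elementary error bound, rather than an appeal to consistency of a minimum-distance estimator over the empirical type. Both are valid; just be aware that the paper's decoder is the concrete matched-filter $\hat{\bb s}=\mm{sign}(\bb G^H\bb z)$, not a nearest-type rule.
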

\begin{proof}
  Please refer to Appendix \ref{app.2} for the proof. 
\end{proof}

\textbf{Achievability Scheme 2:}
 In order to communicate $2M$ bits per channel use, the transmitter constructs a codebook comprised of $2^{2M}$ codewords $\bb s_i\in\mathcal S^{M}$, for $i=1,\cdots,2^{2M}$, where $\mathcal S= \{-1-j,-1+j,1-j,1+j\}$.   Without loss of generality, assume that   codeword $\bb s$ is selected to be transmitted in the considered channel use. Then,  the transmit vector $\bb x\in\mathcal X^M$, where $\mathcal X=\frac{1}{\sqrt 2}\mathcal S$,   is constructed as
 \begin{align}
      \bb x =\frac{1}{\sqrt{2}}\bb s. 
 \end{align} 
 Next,  the elements of $\bb x$ are amplified by $\sqrt{P/M}$ and  then  are transmitted in one channel use from the $M$  transmit antenna. The
  receiver receives $\bb z$. Assume that $\bb G$ given by \eqref{eq.3} is known at the receiver. Then the receiver decides  that  
  \begin{align}\label{eq_dec-s}
      \bb{\hat s}= \bb G \bb z
  \end{align}
  has been the transmitted codeword.
   As a result, an error happens at the receiver if $\bb{\hat s}\neq\bb s$. 
   In Appendix \ref{app.2}, we prove that the error rate goes to zero as $N\to\infty$ when $M<\infty$ is fixed.

Achievability Scheme 2 works by the receiver using its $N\to\infty$ antennas to steer its reception  from the $M$  directions characterized by the $M$ column vectors of the 1-bit CSI matrix $\bb G$, respectively.  On a given direction, the receiver  receives  a complex-valued symbol  on each of its $N$ receive antennas, which can be either equal or   not equal to the actual transmit symbol. 
For    $N\to\infty$ and $M<\infty$ being fixed, the number of  received symbols which are equal to the actual transmit symbol on a given direction is always larger than the  number of received symbols which are not equal to the transmit symbol, leading to  an error-free transmission. 

 \subsection{Observations}
 
 In the following, we provide some observations regarding the derived results.
 
 Given the 1-bit quantizer at the  receiver, it is immediate to conclude that the capacity in \eqref{eq.cmimo_complex} is upper bounded by $2N$ (bits/channel use) independent of the level of quantization at the transmitter and of the CSI knowledge. In other words, this capacity result holds also when  full-precision symbols are transmitted, as well as when  full-precision CSI is present at both the transmitter and the receiver.
 Similarly, given the 1-bit quantizer at the   transmit side, it is immediate to conclude that the capacity in \eqref{eq.cmimo_c} is upper bounded by $2M$ (bits/channel use) independent of the level of quantization at the receiver and of the CSI knowledge. 
 
For the case when $M\to\infty$ and $N<\infty$ is fixed ($N\to\infty$ and $M<\infty$ is fixed), the number of  pilots symbols needed for 1-bit CSI estimation  in order to achieve the capacity in \eqref{eq.cmimo_complex} (in \eqref{eq.cmimo_c})  scales with the number of receive antennas $N$ (transmit antennas M).
 
 It is interesting to see that the capacity limits in Theorems~\ref{theo.4} and \ref{theo.5} hold for any fixed $P>0$ and $P_p>0$. Specifically, the relation between $P$ and $M$ in the error-rate of Achievability Scheme 1, given in \eqref{q7}, is of the form of $PM$. Hence, for any fixed $P>0$ and $M\to\infty$, $PM\to\infty$ holds and  the error-rate of Achievability Scheme~1 goes to zero as $PM\to\infty$. On the other hand,  the relation between $P_p$ and $M$ in the error-rate of Achievability Scheme 1, given in \eqref{q7}, can be written in the form of $M(1-2p_\epsilon)$, where $p_\epsilon$ is the probability of having an incorrect 1-bit CSI estimation on a single channel, which depends on the pilot power $P_p$. Thereby, $P_p$ via $p_\epsilon$ reduces the number of transmit antennas from $M$ to $M(1-2p_\epsilon)$. However, since $p_\epsilon<1/2$ holds for any fixed $P_p>0$, see \eqref{er2} for $K=1$, the effective number of transmit antennas $M(1-2p_\epsilon)$ still satisfies  $M(1-2p_\epsilon)\to\infty$, which in turn results in an error-rate that goes to zero as $M\to\infty$. Similar analysis holds also for Achievability Scheme~2.

Since each receive/transmit antenna receives/transmits its information without requiring  coordination with the other receive/transmit antennas,   Achievability Scheme  1/2  can also be applied to multi-user massive MIMO, i.e., to a MIMO system where the receive/transmit antennas are implemented on  non-cooperating distinct  devices.
 
Finally, it is interesting to note that Achievability Schemes 1 and 2 do not require channel coding  at the transmitter. As a result,  the latency that Achievability Schemes 1 and 2 achieve  is  one channel use. Hence, massive MIMO systems with 1-bit ADCs and 1-bit DACs maybe a practical approach for achieving URLLC.

\section{Conclusion}
In this paper, we presented the capacities  of the complex-valued AWGN massive MIMO system when the inputs,  outputs, and noisy CSI are quantized to one bit of information, for the cases when $M\to\infty$ and $N$ is fixed, and $N\to\infty$ and $M$ is fixed.  We showed that the capacity of the considered MIMO systems are  $2N$ and $2M$ (bits per channel use) when $M\to\infty$  and $N\to\infty$ hold, respectively.  In both cases, we showed that the capacity can be achieved  in one channel use without using channel coding, which results in a latency of one channel use. Moreover, the derived capacities can be achieved with noisy 1-bit CSI at the transmitter-end or at the receiver-end,  and without any CSI at the other end.
%%%%%%%%%%%%%%%%%%%%%%%%%%%%%%%%%%%%%%%%%%%%%%%%%%%%%%%
%%%%%%%%%%%%%%%%%%%%%%%%%%%%%%%%%%%%%%%%%%%%%%%%%%%%%%%%%%%%%%%%%%%%%%%%%%%%%
\begin{appendices}

\newpage

\section{Proof of Theorem \ref{theo.4}}\label{app.1}
\subsection{Converse}
 For  the considered MIMO system, we have
 \begin{equation}\label{eq.36.01}
   \mm{I}(\bb x;\bb z|\bb G)=\mm{H}(\bb z|\bb G)-\mm{H}(\bb z|\bb x,\bb G)\leq\mm{H}(\bb z|\bb G)\leq 2N,
 \end{equation}
 where the last inequality follows due to the 1-bit quantized outputs.
 Hence, the  capacity  of this system cannot be larger than $2N$. In the following, we prove that \eqref{eq.36.01} is asymptotically achievable when $M\to\infty$ and $N<\infty$ is fixed. 
 \subsection{Achievability}
 
Using  Achievability Scheme 1, the transmitter sends $2N$ bits of information at each channel use. Hence, the rate of Achievability Scheme 1 is $2N$ bits per channel use. Now, we are only left to prove that the  symbols received at the receiver can be decoded with  probability of error that vanishes   when $M\to\infty$ and $N<\infty$ is fixed.

Assume that $\bb s$ is transmitted and $\bb z$ received.  Then, the probability of error can be bounded as 
\begin{align} 
   \mm{P}_{\mm e}&= {\rm Pr}\{\bb z\neq \bb s\}\leq \sum_{n=1}^{N}\left(\mm{Pr}(z_{n}^R\neq  s_{n}^R)+\mm{Pr}(z_{n}^I\neq  s_{n}^I)\right) \label{eq.204a}\\
   &= 2N  \mm{Pr}\{z_{1}^R\neq  s_{1}^R\} \label{eq.204b},
\end{align}
where \eqref{eq.204a} follows from the union bound and \eqref{eq.204b} follows from symmetry, i.e., since $\mm{Pr}\{z_{n}^R\neq  s_{n}^R\}=\mm{Pr}\{z_{k}^R\neq  s_{k}^R\}=\mm{Pr}\{z_{n}^I\neq  s_{n}^I\}=\mm{Pr}\{z_{k}^I\neq  s_{k}^I\}$ holds. In the following, we derive a simplified expression for $\mm{Pr}\left\{z^R_1\neq s^R_1\right\}$. 

From \eqref{eq.2}, we can obtain    the real-valued  and imaginary-valued parts of the quantized received symbol at the first receive antenna, $z_{1}^R$ and $z_{1}^I$, as
\begin{align} 
    z_{1}^R&=\textrm{sign}\left(\sqrt{\frac{P}{M}}\sum_{m=1}^{M}\left(h_{1m}^Rx_{m}^R-h_{1m}^I x_{m}^I\right)+w_1^R\right) \label{aa1.1} ,\\
    z_{1}^I&=\textrm{sign}\left(\sqrt{\frac{P}{M}}\sum_{m=1}^{M}\left(h_{1m}^Rx_{m}^I+h_{1m}^I x_{m}^R\right)+w_1^I\right). \label{aa1.12}
\end{align}
Only  transmit symbols from transmit antennas $m=1,2,...,M/N$ are intended for the first receive antennas, and the symbols coming from all other antennas act as interference. Having this in mind,    (\ref{aa1.1}) and (\ref{aa1.12})    can be written as
\begin{align}
    z_{1}^R&=\textrm{sign}\left(\sqrt{\frac{P}{M}}\left( \sum_{m=1}^{M/N} (h_{1m}^R x_{m}^R-h_{1m}^I x_{m}^I) +v^R_1+w_1^R\right) \right)\label{aa1.1-1} ,\\
    z_{1}^I&=\textrm{sign}\left( \sqrt{\frac{P}{M}} \left(\sum_{m=1}^{M/N} (h_{1m}^R x_{m}^I+h_{1m}^I x_{m}^R)  +v^I_1 +w_1^I\right)\right), \label{aa1.12-1}
\end{align}
where 
\begin{align}
    v_{1}^R&= \sum_{m=M/N+1}^{M}\left(h_{1k}^R x_{k}^R-h_{1k}^I x_{k}^I\right)   \label{aab1.1} ,\\
    v_{1}^I&=  \sum_{m=M/N+1}^{M}\left(h_{1k}^R x_{k}^I+h_{1k}^I x_{k}^R\right)    \label{aab1.12}
\end{align}
are the interference at the $n$-th receive antenna. Since $v_{1}^R$ and  $v_{1}^I$ are zero-mean   Gaussian distributed with variance $\frac{P}{2}\frac{N-1}{N}$, we can write  \eqref{aa1.1-1} and \eqref{aa1.12-1} equivalently as 
\begin{align}
    z_{1}^R&=\textrm{sign}\left(\sqrt{\frac{P}{M}}\left( \sum_{m=1}^{M/N} (h_{1m}^R x_{m}^R-h_{1m}^I x_{m}^I) + \hat w_1^R\right) \right)\label{aa1.1-1.1} ,\\
    z_{1}^I&=\textrm{sign}\left( \sqrt{\frac{P}{M}} \left(\sum_{m=1}^{M/N} (h_{1m}^R x_{m}^I+h_{1m}^I x_{m}^R)  + \hat w_1^I\right)\right), \label{aa1.12-1.1}
\end{align}
where $\hat w_n^R$ and $\hat w_n^I$ are independent zero-mean additive white Gaussian noises both with variance 
 \begin{align}\label{eq_var1}
    \sigma_{\hat w}^2= \frac{P}{2}\frac{N-1}{N}+\frac{1}{2}=  \frac{P(N-1)+N}{2N}.
\end{align}
Now, for clarity of presentation, for a given $m$, we   represent $h_{1m}^R x_{m}^R-h_{1m}^I x_{m}^I$ and $h_{1m}^R x_{m}^I+h_{1m}^I x_{m}^R$ in \eqref{aa1.1-1.1} and \eqref{aa1.12-1.1}, respectively, in a matrix form as
\begin{align}\label{eq_max}
  \begin{bmatrix} 
h_{1m}^R x_{m}^R-h_{1m}^I x_{m}^I \\
h_{1m}^R x_{m}^I+h_{1m}^I x_{m}^R 
\end{bmatrix}
=
  \begin{bmatrix} 
h^R_{1m} & -h^I_{1m} \\
h^I_{1m} & h^R_{1m} 
\end{bmatrix}
\begin{bmatrix} 
x^R_m \\
x^I_m 
\end{bmatrix}.
\end{align}
By inserting $x^R_m$ and $x^I_m $ from (\ref{aa1})  into   (\ref{eq_max}), we obtain
\begin{align}\label{eq_max1}
\begin{bmatrix} 
h_{1m}^R x_{m}^R-h_{1m}^I x_{m}^I \\
h_{1m}^R x_{m}^I+h_{1m}^I x_{m}^R 
\end{bmatrix}
=&  \frac{1}{2}  \begin{bmatrix} 
h^R_{1m} & -h^I_{1m} \\
h^I_{1m} & h^R_{1m} 
\end{bmatrix}
  \begin{bmatrix} 
g^R_{1m} & g^I_{1m} \\
-g^I_{1m} & g^R_{1m} 
\end{bmatrix}
\begin{bmatrix} 
s^R_1 \\
s^I_1 
\end{bmatrix} \nonumber\\
=&\frac{1}{2}\begin{bmatrix} 
g^R_{1m} h^R_{1m}+g^I_{1m} h^I_{1m} &  g^I_{1m} h^R_{1m}-g^R_{1m} h^I_{1m} \\
-g^I_{1m} h^R_{1m}+g^R_{1m} h^I_{1m} &  g^R_{1m} h^R_{1m}+g^I_{1m} h^I_{1m}
\end{bmatrix}
\begin{bmatrix} 
s^R_1 \\
s^I_1
\end{bmatrix}.
\end{align}
Now, there are four cases depending on whether $g^R_{1m}=g^I_{1m}$ or $g^R_{1m}=-g^I_{1m}$ holds, and depending on whether $s^R_1=s^I_1$ or $s^R_1=-s^I_1$ holds.  If $g^R_{1m}=g^I_{1m}$ and $s^R_1=s^I_1$ hold, or if  $g^R_{1m}=-g^I_{1m}$ and $s^R_1=-s^I_1$ hold, \eqref{eq_max1} simplifies to
\begin{align}\label{eq_max2}
\begin{bmatrix} 
h_{1m}^R x_{m}^R-h_{1m}^I x_{m}^I \\
h_{1m}^R x_{m}^I+h_{1m}^I x_{m}^R 
\end{bmatrix}
= \begin{bmatrix} 
s^R_1 g^R_{1m}  h^R_{1m} \\
s^I_1 g^I_{1m}  h^I_{1m}
\end{bmatrix}.
\end{align}

If $g^R_{1m}=g^I_{1m}$ and $s^R_1=-s^I_1$ hold, or if $g^R_{1m}=-g^I_{1m}$ and $s^R_1=s^I_1$ hold,  \eqref{eq_max1} simplifies to
\begin{align}\label{eq_max2.1}
\begin{bmatrix} 
h_{1m}^R x_{m}^R-h_{1m}^I x_{m}^I \\
h_{1m}^R x_{m}^I+h_{1m}^I x_{m}^R 
\end{bmatrix}
= \begin{bmatrix} 
s^R_1 g^I_{1m}  h^I_{1m} \\
s^I_1 g^R_{1m}  h^R_{1m}
\end{bmatrix}
\end{align}
On the other hand, we have the following depending on whether the estimation is correct or not
\begin{align}\label{eq.eq}
  g^\alpha_{1m}  h^\alpha_{1m} =\left\{
  \begin{array}{rl}
  |h^\alpha_{1m}|    & \textrm{ if } h^\alpha_{1m} \textrm{ is correctly estimated}\\
  -|h^\alpha_{1m}|     & \textrm{ if } h^\alpha_{1m} \textrm{ is incorrectly estimated} 
  \end{array}
  \right.\quad \alpha\in\{R,I\}.
\end{align}
Using \eqref{eq.eq}, we can   write $h_{1m}^R x_{m}^R-h_{1m}^I x_{m}^I $ in \eqref{eq_max1} equivalently as
\begin{align}
  h_{1m}^R x_{m}^R-h_{1m}^I x_{m}^I =\left\{
  \begin{array}{rl}
  |\hat h^R_{1m}|    & \textrm{ if } h^R_{1m} \textrm{ is correctly estimated and } x_m^I=0 \\
      & \textrm{ or } h^I_{1m} \textrm{ is correctly estimated and } x_m^R=0. \\
  -|\hat h^R_{1m}|    & \textrm{ if } h^R_{1m} \textrm{ is incorrectly estimated and } x_m^I=0 \\
      & \textrm{ or } h^I_{1m} \textrm{ is incorrectly estimated and } x_m^R=0, \\
  \end{array}
  \right. 
\end{align}
where $\hat h^R_{1m}$ is a zero-mean real-valued Gaussian random variable with variance $1/2$.
Similarly, we can   write $h_{1m}^R x_{m}^R-h_{1m}^I x_{m}^I $ in \eqref{eq_max1} equivalently as
\begin{align}
  h_{1m}^R x_{m}^I+h_{1m}^I x_{m}^R  =\left\{
  \begin{array}{cc}
  |\hat h^I_{1m}|    & \textrm{ if } h^R_{1m} \textrm{ is correctly estimated and } x_m^R=0 \\
      & \textrm{ or } h^I_{1m} \textrm{ is correctly estimated and } x_m^I=0. \\
 -|\hat h^I_{1m}|    & \textrm{ if } h^R_{1m} \textrm{ is correctly estimated and } x_m^R=0 \\
      & \textrm{ or } h^I_{1m} \textrm{ is correctly estimated and } x_m^I=0, \\
  \end{array}
  \right. 
\end{align}
where $\hat h^R_{1m}$ is a zero-mean real-valued Gaussian random variable with variance $1/2$.

Without loss of generality, assume that there are $K_R$ incorrect estimates that influence $ h_{1m}^R x_{m}^I+h_{1m}^I x_{m}^R $ and  $K_I$ incorrect estimates that influence $ h_{1m}^R x_{m}^I+h_{1m}^I x_{m}^R $. Then,
 we can write (\ref{aa1.1-1}) and (\ref{aa1.12-1}) equivalently   as
\begin{align}
z_{1}^R&=\textrm{sign}\left(\sqrt{\frac{P}{M}}    s_1^R   \sum_{m=K^R+1}^{M/N } |\hat h_m^R| -\sqrt{\frac{P}{M}}    s_1^R   \sum_{m=1}^{K^R} |\hat h_m^R|+ \hat w_1^R\right) \label{aa1.1-5-1a} ,\\
    z_{1}^I&=\textrm{sign}\left(\sqrt{\frac{P}{M}}   s_1^I\sum_{m=K^I+1}^{M/N} |\hat h_m^I|-\sqrt{\frac{P}{M}}   s_1^I\sum_{m=1}^{K^I} |\hat h_m^I| + \hat w_1^I\right). \label{aa1.12-5-1a}
\end{align}

Since the received real-valued symbol at the first antenna, $z_1^R$,  is given by (\ref{aa1.1-5-1a}), $\mm{Pr}\{z_{1}^R\neq  s_{1}^R\}$ is given by 
 \begin{equation}\label{q1}
\mm{Pr}\{z_{1}^R\neq  s_{1}^R\}=  \mm{Pr}\left( \textrm{sign}\left(\sqrt{\frac{P}{M}}    s_1^R   \sum_{m=K^R+1}^{M/N } |\hat h_m^R| -\sqrt{\frac{P}{M}}    s_1^R   \sum_{m=1}^{K^R} |\hat h_m^R|+ \hat w_1^R\right)\neq s_1^R\right).
 \end{equation}

Setting $L=M/N$, \eqref{q1} can be written as
\begin{align}\label{q3}
&\mm{Pr}\{z_{1}^R\neq  s_{1}^R\} = \sum_{k=0}^L  \mm{Pr}\left(K^R=k\right) \int_{\hat h_1^R}\cdots\int_{\hat h_L^R}\nonumber\\
&\mm{Pr}\left( \textrm{sign}\left(\sqrt{\frac{P}{M}}    s_1^R   \sum_{m=k+1}^{L } |\hat h_m^R| -\sqrt{\frac{P}{M}}    s_1^R   \sum_{m=1}^{k} |\hat h_m^R|+ \hat w_n^R\right)\neq s_1^R \right)  \prod_{m=1}^{L}f(\hat h_m^R)\mm d\hat h_m^R\nonumber\\
&=\sum_{k=0}^L  \mm{Pr}\left(K^R_1=k\right) \int_{\hat h_1^R}\cdots\int_{\hat h_L^R}Q\left(\frac{\sqrt{\frac{P}{M}}\left(    \sum\limits_{m=k+1}^{L } |\hat h_m^R| -   \sum\limits_{m=1}^{k} |\hat h_m^R|\right)}{\sqrt{\frac{P(N-1)+N}{2N}}} \right) \prod_{m=1}^{L}f(\hat h_m^R)\mm d\hat h_m^R.
\end{align}
In \eqref{q3}, $\mm{Pr}\left(K^R=k\right)$ is the probability of receiving $k$ incorrect 1-bit CSI's that affect  $z_1^R$, which can be found as
\begin{equation}\label{q4}
    \mm{Pr}\left(K^R=k\right)=\binom{L}{k}p_{\epsilon}^{k}\left(1-p_{\epsilon}\right)^{L-k}, 
\end{equation}
where $p_{\epsilon}$ is given in \eqref{er2} and has been derived in Appendix \ref{app3}. 
Due to the law of large numbers, as $L\to\infty$, we have the following asymptotic equality
\begin{align}\label{q6}
  & \frac{1}{L} \left(\sqrt{\frac{P}{M}} \sum\limits_{m=k+1}^{L } |\hat h_m^R| -\sqrt{\frac{P}{M}}    \sum\limits_{m=1}^{k} |\hat h_m^R|\right) \to \frac{1}{L} (L-k) E\{|\hat h_m^R|\}-\frac{1}{L} k E\{|\hat h_m^R|\}\nonumber\\
  & =  \frac{L-2k}{L}   E\{|\hat h_m^R|\},\textrm{ as } L\to\infty .
\end{align}
 As a result of \eqref{q6}, \eqref{q3} for $L\to\infty$ simplifies to 
\begin{align}\label{q7}
 \mm{Pr}\{z_{1}^R\neq  s_{1}^R\}&\to\sum_{k=0}^L  \mm{Pr}\left(K^R_1=k\right)\nonumber\\
&\times\int_{\hat h_1^R}\cdots\int_{\hat h_L^R}Q\left(\frac{\sqrt{\frac{P}{M}}    (L-2k)    E\{|\hat h_m^R|\}}{\sqrt{\frac{P(N-1)+N}{2N}}} \right)\prod_{m=1}^{L}f(\hat h_m^R)\mm d\hat h_m^R\nonumber\\
 &= \sum_{k=0}^L  \mm{Pr}\left(K^R_1=k\right) Q\left(\frac{\sqrt{\frac{P}{M}}    (L-2k)    E\{|\hat h_m^R|\}}{\sqrt{\frac{P(N-1)+N}{2N}}} \right) .
\end{align}
Since $k\geq 0$, \eqref{q7} can be  upper bounded for $L\to\infty$  as follows
\begin{align}\label{q8}
\mm{Pr}\{z_{1}^R\neq  s_{1}^R\}&\to \sum_{k=0}^{L/2}  \mm{Pr}\left(K^R_1=k\right) Q\left(\frac{\sqrt{\frac{P}{M}}(L-2k)      E\{|\hat h_m^R|\}}{\sqrt{\frac{P(N-1)+N}{2N}}} \right)\nonumber\\
  &+\sum_{k=L/2+1}^{L}  \mm{Pr}\left(K^R_1=k\right) Q\left(\frac{\sqrt{\frac{P}{M}}(L-2k)      E\{|\hat h_m^R|\}}{\sqrt{\frac{P(N-1)+N}{2N}}} \right)\nonumber\\
  &\overset{(a)}{\leq} \sum_{k=0}^{L/2}  \mm{Pr}\left(K^R_1=k\right)Q\left(\frac{\sqrt{\frac{P}{M}}(L-2k)      E\{|\hat h_m^R|\}}{\sqrt{\frac{P(N-1)+N}{2N}}} \right)\nonumber\\
  &+\sum_{k=L/2+1}^{L}  \mm{Pr}\left(K^R_1=k\right)\nonumber\\
  &\overset{(b)}{\leq} \sum_{k=0}^{L/2}  \mm{Pr}\left(K^R_1=k\right)e^{-\beta^2(L-2k)^2/2}+\sum_{k=L/2+1}^{L}  \mm{Pr}\left(K^R_1=k\right),
\end{align}
    where 
 $(a)$ follows since $Q(x)\leq 1$, $\forall x$,    $(b)$ follows since for $x\geq 0$
\begin{equation}\label{q8.1}
    Q(x)\leq \frac{e^{-x^2}}{2} 
\end{equation}
holds, and $\beta$ is given by
\begin{equation}\label{q9}
\beta=   \frac{\sqrt{\frac{P}{M}}      E\{|\hat h_m^R|\}}{\sqrt{\frac{P(N-1)+N}{2N}}}. 
\end{equation}
Now, by substituting   \eqref{q4} into  \eqref{q8}, we obtain for $L\to\infty$
\begin{align}\label{q10}
 \mm{Pr}\left\{z^R_1\neq s^R_1\right\}&\leq  \sum_{k=0}^{L/2} \binom{L}{k} p_{\epsilon}^k(1-p_{\epsilon})^{L-k}e^{-\beta^2(L-2k)^2/2}+\sum_{k=L/2+1}^{L} \binom{L}{k} p_{\epsilon}^k(1-p_{\epsilon})^{L-k}\nonumber\\
 &\overset{(c)}{\leq}  \overbrace{\sum_{k=0}^{L/2}\mm{Pr}\left(K^R=k\right)e^{-\beta^2(L-2k)^2/2} }^{\mathcal O_1}+\overbrace{\sum_{k=L/2+1}^{L} 2^Lp_{\epsilon}^{k}(1-p_{\epsilon})^{L-k}}^{\mathcal O_2},
\end{align}
where $p_{\epsilon}$ is given in \eqref{er2} and $(c)$ follows from  $\binom{L}{k}\leq 2^L$. We can upper bound $\mathcal O_2$ as

\begin{align}\label{q10.2}
 \mathcal O_2&=   \sum_{k=L/2+1}^{L} \binom{L}{k} p_{\epsilon}^k(1-p_{\epsilon})^{L-k}< 2^L \frac{L}{2}\mm{max}\left\{ p_{\epsilon}^k(1-p_{\epsilon})^{L-k}\right\}< 2^L \frac{L}{2}p_{\epsilon}^{L/2}(1-p_{\epsilon})^{L/2}\nonumber\\
 &=\frac{L}{2}\left(2^2 p_{\epsilon}(1-p_{\epsilon})\right)^{L/2} \to 0,
\end{align}
since $4 p_{\epsilon}(1-p_{\epsilon})<1$ for $p_{\epsilon}<1/2$. To upper bound $\mathcal O_1$, we  consider two cases. Let $\mathcal E_1$ and  $\mathcal E_2$  be defined as
\begin{align}
    \mathcal E_1=\left\{k: 0\leq k\leq L/2; \; k/L\nrightarrow \frac{1}{2} \textrm{ as }L\to\infty\right\},
\end{align}
\begin{align}
    \mathcal E_2=\left\{k: 0\leq k\leq L/2; \; k/L\to \frac{1}{2} \textrm{ as }L\to\infty\right\}.
\end{align}
Then, using \eqref{q4},  $\mathcal O_1$ is upper bounded as 
\begin{align}\label{q12}
    \mathcal O_1&\leq\sum_{k\in \mathcal E_1} \mm{Pr}\left(K^R=k\right)e^{-\beta^2(L-2k)^2/2}+\sum_{k\in \mathcal E_2} 2^{L}p_{\epsilon}^k(1-p_{\epsilon})^{L-k}e^{-\beta^2(L-2k)^2/2}.
\end{align}
Now, the first sum in \eqref{q12} is upper bounded as
\begin{align}\label{q12s}
   \sum_{k\in \mathcal E_1} \mm{Pr}\left(K^R=k\right)e^{-\beta^2(L-2k)^2/2}   \leq\sum_{k\in \mathcal E_1}e^{-\beta^2L^2(1-2\frac{k}{L})^2}\to 0 \hspace{4mm}\mm{as}\hspace{4mm}L\to\infty.
\end{align}
On the other hand, the second sum in \eqref{q12} is upper bounded as 
\begin{align}\label{q13}
    &\sum_{k\in \mathcal E_2} 2^{L}p_{\epsilon}^k(1-p_{\epsilon})^{L-k}e^{-\beta^2(L-2k)^2/2}
   =  \sum_{k\in \mathcal E_2} 2^{L}p_{\epsilon}^{L\frac{k}{L}}(1-p_{\epsilon})^{L(1-k/L)}e^{-\beta^2L^2(1-2k/L)^2/2}
    \nonumber\\
    &=  \sum_{k\in \mathcal E_2} 2^{L}p_{\epsilon}^{L\frac{1}{2}}(1-p_{\epsilon})^{L(1-1/2)}e^{-\beta^2L^2(1-1)^2/2}
    \nonumber\\
     &=  |\mathcal E_2| (2^2 p_{\epsilon} (1-p_{\epsilon}))^{L/2} 
    \nonumber\\
    &\leq \frac{L}{2} (2^2 p_{\epsilon} (1-p_{\epsilon}))^{L/2} \to 0
\end{align}
 since $4 p_{\epsilon}(1-p_{\epsilon})<1$ for $p_{\epsilon}<1/2$. 
 Combining \eqref{q10.2}, \eqref{q12}, \eqref{q12s}, and \eqref{q13}, we obtain $\mm{Pr}\left\{z^R_1\neq s^R_1\right\}\leq 0$ as $L\to\infty$.
 
 %%%%%%%%%%%%%%%%%%%%%%%%%%%%%%%%%%%%%%%%%%%%%%%%%%%%%%%%%%%%%%%%%%%%%
%%%%%%%%%%%%%%%%%%%%%%%%%%%%%%%%%%%%%%%%%%%%%%%%%%%%%%%%%%%%%%%%%%%%%%%
%\subsection{Conditions to hold}
%When $M\to\infty$, the number of incorrect 1-bit CSI, $k$ in \eqref{q8} will converge to $Lp_{\epsilon}$ with probability one. Hence, in order for the probability in \eqref{q8} to go to zero, the following needs to hold 
%\begin{align}\label{eq.ex1}
%    \frac{\sqrt{\frac{P}{M}}\left(L-2Lp_{\epsilon}\right)E\{|\hat h_m^R|\}}{\frac{P(N-1)+N}{2N}}=\sqrt{\frac{2PM}{PN(N-1)+N^2}}(1-2p_{\epsilon})E\{|\hat h_m^R|\}\to 0.
%\end{align}
%According to \eqref{er2}, $p_{\epsilon}<\frac{1}{2}$. Therefore, when $N$ is limited, in order for \eqref{eq.ex1} to hold, $PM$ needs to go to infinity, and when $N\to\infty$, $\frac{PM}{N^2}$ needs to go to infinity.  
%\begin{align}
%\end{align}

 %%%%%%%%%%%%%%%%%%%%%%%%%%%%%%%%%%%%%%%%%%%%%%%%%%%%%%%%%%%%%%%%%%%%%%%%%%%%%%%%%%%%%%%%%
%%%%%%%%%%%%%%%%%%%%%%%%%%%%%%%%%%%%%%%%%%%%%%%%%%%%%%%%%%%%%%%%%%%%%%%%%%%%%%%%%%%%%%%%%%%
\section{Proof of Theorem \ref{theo.5}}\label{app.2}
\subsection{Converse}
 For the considered 1-bit quantized MIMO system, we have 
 \begin{equation}\label{eq10.0--1}
     \mm I(\bb x;\bb z|\bb G)=\mm H(\bb x|\bb G)-\mm H(\bb x|\bb z,\bb G)\leq\mm H(\bb x|\bb G)=2M,
 \end{equation}
 where the last inequality follows due to the 1-bit quantized inputs.
 Hence, the  capacity of this system cannot be larger than $2M$. In the following, we prove that \eqref{eq10.0--1} is asymptotically achievable when $N\to\infty$ and $M<\infty$ is fixed. 
 \subsection{Achievability}
 
 Using  Achievability Scheme 2, the transmitter sends $2M$ bits of information at each channel use. Hence, the rate of Achievability Scheme 2 is $2M$ bits per channel use. Now, we are only left to prove that the  symbols received at the receiver can be decoded with  probability of error vanishes   when $N\to\infty$ and $M<\infty$ is fixed.

Assume that $\bb s$ is transmitted and $\bb z$ received. From   $\bb z$,  we find $\bb{\hat s}$ using \eqref{eq_dec-s}.  Then, the probability of error   can be bounded as
\begin{align} 
   \mm{P}_{\mm e}&= {\rm Pr}\{\bb{\hat s}  \neq \bb s\}\leq \sum_{m=1}^{M}\left(\mm{Pr}\{\hat s_{m}^R\neq  s_{m}^R\}+\mm{Pr}\{\hat s_{m}^I\neq  s_{m}^I\}\right) \label{eq.204a--1}\\
   &= 2M  \mm{Pr}\{\hat s_{1}^R\neq  s_{1}^R\}, \label{eq.204b--1}
\end{align}
where \eqref{eq.204a--1} follows from the union bound and \eqref{eq.204b--1} follows from symmetry. Now, $\mm{Pr}\{\hat s_{1}^R\neq  s_{1}^R\}$  is given by 
    \begin{align}\label{eq26}
  &\mm{Pr}\{\hat s_{1}^R\neq  s_{1}^R\}= \mm{Prob}\left\{\mm{sign}\left(\sum_{n=1}^{N}  g_{n1}^R  z_{n}^R\right)\neq  s_{1}^R\right\}\nonumber\\
&=\mm{Pr}\left(\mm{sign}\left(\sum_{n=1}^{N}  g_{n1}^R\mm{sign}\left[\sqrt{\frac{P}{2M}}\sum_{m=1}^{M}(h_{nm}^R  s_{n}^R-h_{nm}^I  s_{n}^I)+w_n^R\right]\right)\neq  s_{1}^R\right)\nonumber\\
    &=\mm{Pr}\Bigg(\mm{sign}\Bigg(\sum_{n=1}^{N}  g_{n1}^R\mm{sign}\Bigg[\sqrt{\frac{P}{2M}} h_{n1}^R  s_{1}^R  \nonumber\\
    &\qquad\qquad\qquad  +\sqrt{\frac{P}{2M}}\sum_{m= 2}^{M}h_{km}^R  s_{n}^R
    -\sqrt{\frac{P}{2M}}\sum_{m= 1}^{M} h_{nm}^I  s_{n}^I 
    +w_n^R\Bigg]\Bigg)\neq  s_{1}^R\Bigg).
 \end{align}
The estimation of a given $h_{n1}^R$    is correct  with probability $1-p_{\epsilon}$, in which case $g_{n1}^R=\mm{sign}(h_{n1}^R)$ holds, and is incorrect with probability $p_{\epsilon}$, in which case $g_{n1}^R=-\mm{sign}(h_{n1}^R)$ holds. Without loss of generality, assume that  $h_{nm}^R$, for $n=1,2,...,j$ are correctly estimated and $h_{nm}^R$ for $n=j+1,2,...,N$ are incorrectly estimated, where $j$ is an RV that takes values from one to $M$. Then,  \eqref{eq26} can be written as
\begin{align}\label{ea14}
    &  \mm{Pr}\{\hat s_{1}^R\neq  s_{1}^R\} =\sum_{j=0}^{N}p_{\epsilon}^j(1-p_{\epsilon})^{N-j}\binom{N}{j}
\nonumber\\
&\times\mm{Pr}\Bigg\{\mm{sign}\Bigg(\sum_{n=1}^{N-j}\mm{sign}(h_{n1}^R) \mm{sign}\Bigg[\sqrt{\frac{P}{2M}} \left(h_{n1}^R  s_{1}^R   +\sum_{m= 2}^{M}h_{km}^R  s_{n}^R
    -\sum_{m= 1}^{M} h_{nm}^I  s_{n}^I\right)+w^R_n\Bigg]
\nonumber\\
&-\sum_{n=N-j+1}^{N}\mm{sign}(h_{n1}^R)\nonumber\\
&\times\mm{sign}\Bigg[\sqrt{\frac{P}{2M}} \left(h_{n1}^R  s_{1}^R   +\sum_{m= 2}^{M}h_{km}^R  s_{n}^R
    -\sum_{m= 1}^{M} h_{nm}^I  s_{n}^I \right)+w^R_n\Bigg]\Bigg)\neq  s_{1}^R\Bigg\}.
\end{align}
Since   $\mm{sign}(a)\mm{sign}(b)=\mm{sign}(ab)$ holds, we can  write \eqref{ea14} as
\begin{align}\label{ea15}
 &  \mm{Pr}\{\hat s_{1}^R\neq  s_{1}^R\} =\sum_{j=0}^{N}\binom{N}{j}p_{\epsilon}^j(1-p_{\epsilon})^{N-j}
\nonumber\\
&\times\mm{Pr}\Bigg\{\mm{sign}\Bigg(\sum_{n=1}^{N-j} \mm{sign}\Bigg[\sqrt{\frac{P}{2M}} |h_{n1}^R|  s_{1}^R  \nonumber\\ &+\mm{sign}(h_{n1}^R) \left\{\sqrt{\frac{P}{2M}}\left(\sum_{m= 2}^{M}h_{km}^R  s_{n}^R
    - \sum_{m= 1}^{M} h_{nm}^I  s_{n}^I\right)+ w^R_n\right\}\Bigg]
\nonumber\\
&-\sum_{n=N-j+1}^{N}  \mm{sign}\Bigg[\sqrt{\frac{P}{2M}} |h_{n1}^R|  s_{1}^R \nonumber\\
&+\mm{sign}(h_{n1}^R)\left\{ \sqrt{\frac{P}{2M}}\left(\sum_{m= 2}^{M}h_{km}^R  s_{n}^R
    -\sum_{m= 1}^{M} h_{nm}^I  s_{n}^I\right) +  w^R_n\right\}\Bigg]\Bigg)\neq  s_{1}^R\Bigg\} \nonumber\\
  & = \sum_{j=0}^{N}\binom{N}{j}p_{\epsilon}^j(1-p_{\epsilon})^{N-j}\mm{Pr}\Bigg\{\mm{sign}\Bigg(\sum_{n=1}^{N-j} \mm{sign}\Bigg[\sqrt{\frac{P}{2M}} |h_{n1}^R|  s_{1}^R   +  \hat  w_n^R\Bigg]\nonumber\\
 &-\sum_{n=N-j+1}^{N}  \mm{sign}\Bigg[\sqrt{\frac{P}{2M}} |h_{n1}^R|  s_{1}^R   +   \hat w^R_n\Bigg]\Bigg)\neq  s_{1}^R\Bigg\}, 
\end{align}
where
\begin{align}\label{ea15a}
   \hat w_n^R= \sqrt{\frac{P}{2M}}\sum_{m= 2}^{M}h_{km}^R  s_{n}^R
    + \sqrt{\frac{P}{2M}}\sum_{m= 1}^{M} h_{nm}^I  s_{n}^I+ w_n 
\end{align}
is a zero-mean Gaussian RV with variance
\begin{align}\label{ea15b}
   \sigma^2_{\hat w}= \frac{P}{2M}   \frac{2M-1}{2}
       + \frac{1}{2} .
\end{align}
Since $\mm{sign}(a)=\mm{sign}(a b)$, for any $b>0$, we can write \eqref{ea15}  as
\begin{align}\label{ea15-1}
   \mm{Pr}\{\hat s_{1}^R\neq  s_{1}^R\} &=   \sum_{j=0}^{N}\binom{N}{j}p_{\epsilon}^j(1-p_{\epsilon})^{N-j}
\mm{Pr}\Bigg\{\mm{sign}\Bigg(\frac{1}{N}\sum_{n=1}^{N-j} \mm{sign}\Bigg[\sqrt{\frac{P}{2M}} |h_{n1}^R|  s_{1}^R   +  \hat  w_n^R\Bigg]\nonumber\\
 &-\frac{1}{N} \sum_{n=N-j+1}^{N}  \mm{sign}\Bigg[\sqrt{\frac{P}{2M}} |h_{n1}^R|  s_{1}^R   +   \hat w_n^R\Bigg]\Bigg)\neq  s_{1}^R\Bigg\}.
\end{align}
Now,  the sums in \eqref{ea15-1} for $N\to\infty$  can be written  as
\begin{align}
   \frac{1}{N} \sum_{n=1}^{N-J} \mm{sign}\Bigg[\sqrt{\frac{P}{2M}} |h_{n1}^R|  s_{1}^R   +  \hat  w_n\Bigg] &=\frac{N-J}{N}\frac{1}{N-J} \sum_{n=1}^{N-J} \mm{sign}\Bigg[\sqrt{\frac{P}{2M}} |h_{n1}^R|  s_{1}^R   +  \hat  w_n\Bigg]\nonumber\\
    &\to 
        (1-\alpha_j) E\left\{\mm{sign}\Bigg[\sqrt{\frac{P}{2M}} |h_{n1}^R|  s_{1}^R   +  \hat  w_n\Bigg]\right\} ,
\end{align}
\begin{align}
 \frac{1}{N} \sum_{n=N-J+1}^{N} \mm{sign}\Bigg[\sqrt{\frac{P}{2M}} |h_{n1}^R|  s_{1}^R   +  \hat  w_n\Bigg] &=\frac{J}{N}\frac{1}{J} \sum_{n=N-J+1}^{N} \mm{sign}\Bigg[\sqrt{\frac{P}{2M}} |h_{n1}^R|  s_{1}^R   +  \hat  w_n\Bigg]\nonumber\\
    &\to  \alpha_j E\left\{\mm{sign}\Bigg[\sqrt{\frac{P}{2M}} |h_{n1}^R|  s_{1}^R   +  \hat  w_n\Bigg]\right\} , 
\end{align}
where
\begin{align}\label{eq.ex3}
    \alpha_j=\lim_{N\to\infty} \frac{j}{N}.
\end{align}
Hence, we can write \eqref{ea15-1} for $N\to\infty$ as 
\begin{align}\label{ea15-2}
   \mm{Pr}\{\hat s_{1}^R\neq  s_{1}^R\} &\to   \sum_{j=0}^{N}\binom{N}{j}p_{\epsilon}^j(1-p_{\epsilon})^{N-j}\nonumber\\
   &\mm{Pr}\Bigg\{\mm{sign}\Bigg((1-\alpha_j) E\left\{\mm{sign}\Bigg[\sqrt{\frac{P}{2M}} |h_{n1}^R|  s_{1}^R   +  \hat  w_n\Bigg]\right\}\nonumber\\
 &-\alpha_j E\left\{\mm{sign}\Bigg[\sqrt{\frac{P}{2M}} |h_{n1}^R|  s_{1}^R   +  \hat  w_n\Bigg]\right\}  \Bigg) \neq s_{1}^R\Bigg\} \nonumber\\
 &=\sum_{j=0}^{N}\binom{N}{j}p_{\epsilon}^j(1-p_{\epsilon})^{N-j}\nonumber\\
 &\times
  \mm{Pr}\left\{\mm{sign}\Bigg(E\left\{\mm{sign}\Bigg[\sqrt{\frac{P}{2M}} |h_{n1}^R|  s_{1}^R   +  \hat  w_n\Bigg]\right\} (1-2\alpha_j )   \Bigg) \neq s_{1}^R\right\} \nonumber\\
  &=\sum_{j=0}^{N}\binom{N}{j}p_{\epsilon}^j(1-p_{\epsilon})^{N-j}
  \mm{Pr}\left\{\mm{sign}\bigg(s_{1}^R  (1-2\alpha_j )  \bigg) \neq s_{1}^R\right\} \nonumber\\
  &=\sum_{j=0}^{N}\binom{N}{j}p_{\epsilon}^j(1-p_{\epsilon})^{N-j}
  \mm{Pr}\left\{\mm{sign}   (1-2\alpha_j ) \neq 1\right\} \nonumber\\
 &\overset{(a)}{=}\sum_{J=\frac{N}{2}}^{N}\binom{N}{j}p_{\epsilon}^j(1-p_{\epsilon})^{N-j}\overset{(b)}{\leq}\frac{N}{2}\binom{N}{j}p_{\epsilon}^{N/2}(1-p_{\epsilon})^{N/2}\nonumber\\
 &\overset{(c)}{\leq}\frac{N}{2}2^N p_{\epsilon}^{N/2}(1-p_{\epsilon})^{N/2}\frac{N}{2}\left(4p_{\epsilon}(1-p_{\epsilon})\right)^{N/2}\overset{(e)}{\to} 0, \textrm{as } N\to\infty,
\end{align}
where $(a)$ holds since $\mm{Pr}\left\{\mm{sign}   (1-2\alpha_j ) \neq 1\right\}=0$ for $j<\frac{N}{2}$ and $\mm{Pr}\left\{\mm{sign}(1-2\alpha_j\neq 1)\right\}=1$ for $j=\frac{N}{2}$,\dots,$N$.  $(b)$ holds since $p_{\epsilon}<\frac{1}{2}$,  $(c)$ holds since $\binom{N}{j}<2^N$, and $(e)$ holds since $4p_{\epsilon}(1-p_{\epsilon})<1$. 

%\subsection{Condition to hold}
%When $N\to\infty$, the number of incorrect 1-bit CSI will converge to $Np_{\epsilon}$. Hence, $\alpha_j$ in \eqref{eq.ex3} will be $\alpha_j=p_{\epsilon}<\frac{1}{2}$ which results in $1-2\alpha_j>0$. On the other hand, since $P>0$, $\mm{Prob}\left\{ \mm{sign}\Bigg[\sqrt{\frac{P}{2M}} |h_{n1}^R|  s_{1}^R   +  \hat  w_n\Bigg]=s_{1}^R \right\}>\frac{1}{2}$ which results in $\mm{sign}\left(E\left\{\mm{sign}\Bigg[\sqrt{\frac{P}{2M}} |h_{n1}^R|  s_{1}^R   +  \hat  w_n\Bigg]\right\}\right)=s^R_1$. Therefore, when $N\to\infty$, 
%\begin{equation}
%   \mm{Pr}\{\hat s_{1}^R\neq  s_{1}^R\} \to \mm{Pr}\left\{\mm{sign}\Bigg(E\left\{\mm{sign}\Bigg[\sqrt{\frac{P}{2M}} |h_{n1}^R|  s_{1}^R   +  \hat  w_n\Bigg]\right\} (1-2\alpha_j )   \Bigg) \neq s_{1}^R\right\}=0.
%\end{equation}
%Therefore, the conditions $P>0$ and $P_p>0$ are enough for the probability of error to go to zero.

 %%%%%%%%%%%%%%%%%%%%%%%%%%%%%%%%%%%%%%%%%%%%%%%%%%%%%%%%%%%%%%%%%%%%%%%%%%%%%%%%%%%%%
 %%%%%%%%%%%%%%%%%%%%%%%%%%%%%%%%%%%%%%%%%%%%%%%%%%%%%%%%%%%%%%%%%%%%%%%%%%%%%

%%%%%%%%%%%%%%%%%%%%%%%%%%%%%%%%%%%%%%%%%%%%%%%%%%%%%%%%%%%%%%%%%%%%%%%%%%%%%%%%%%%%
%%%%%%%%%%%%%%%%%%%%%%%%%%%%%%%%%%%%%%%%%%%%%%%%%%%%%%%%%%%%%%%%%%%%%%%%%%%%%%%%%%%
\section{Probability of Erroneous 1-bit CSI Estimation}\label{app3}
In the following, we find the probability of receiving an erroneous 1-bit CSI from $K$ pilot transmissions over  the real-valued or complex-valued parts of the channel, when the symbol $x=\sqrt{P_p}$ is transmitted from one of the transmit/receive antennas. Since the distribution of the real-valued and complex-valued parts of the channel are identical, the following derivations holds for both the real-valued and complex-valued 1-bit CSI estimations.

The probability of receiving an erroneous  1-bit CSI estimate $z_n$, where $z_n\neq\mm{sign}(h_{mn})$ holds, by employing a single  pilot transmission over the  real-valued or imaginary-valued  channel $h_{mn}$ is given by 
\begin{equation}\label{er1}
\begin{split}
   &\mm{Pr}\left(z_n\neq\mm{sign}(h_{mn})\right)= \mm{Pr}\left(\mm{sign}(h_{mn})\neq\mm{sign}\left(\sqrt{P_p}h_{mn}+n_n\right)\right)\\
   &= \int_{-\infty}^{+\infty}\mm{Pr}\left(\mm{sign}(h_{mn})\neq\mm{sign}\left(\sqrt{P_p}  h_{mn}+n_n\right)\big|h_{mn}\right)f_{h_{mn}}(  h_{mn})\mm dh_{mn}\\
   & =\int_{-\infty}^{+\infty}Q\left(\sqrt{P_p}|  h_{mn}|\right)f_{h_{mn}}(  h_{mn})\mm dh_{mn}
    =\frac{1}{2\pi}\int_{-\infty}^{+\infty}\int_{\sqrt{P_p}|h_{mn}|}^{+\infty}e^{-\frac{u^2+  h^2_{mn}}{2}}\mm du\mm dh_{mn}\\
    &=\frac{1}{\pi}\int_{0}^{+\infty}\int_{\sqrt{P_p}  h_{mn}}^{+\infty}e^{-\frac{u^2+  h^2_{mn}}{2}}\mm du\mm dh_{mn}=\frac{1}{\pi}\int_{0}^{+\infty}\int_{\tan^{-1}\left(\sqrt{P_p}  h_{mn}\right)}^{\frac{\pi}{2}}e^{-\frac{r^2}{2}}r\mm dr\mm d\theta\\
    &=\frac{1}{2}-\frac{\tan^{-1}\left(\sqrt{P_p}\right)}{\pi}.
\end{split}
\end{equation}
Now, when the pilot estimation of the same channel is acquired by   $K$ pilot symbols such that the final 1-bit CSI is decided based on majority rule, an erroneous 1-bit CSI estimation of the channel occurs when half or more than half of the $K$ pilot symbols are  erroneously estimated. Since the probability of erroneous   1-bit CSI estimation from one pilot symbol was found in \eqref{er1},  the probability of receiving an erroneous 1-bit CSI from $K$ pilot symbols based on majority rule  is given by 
\begin{equation}\label{er2}
     p_{\epsilon}=\sum_{j=\frac{K}{2}}^{K}\binom{k}{j}\left(\frac{1}{2}-\frac{\arctan\left(\sqrt{P_p}\right)}{\pi}\right)^j\left(\frac{1}{2}+\frac{\arctan\left(\sqrt{P_p}\right)}{\pi}\right)^{K-j}.
\end{equation}
\end{appendices}
\bibliography{ref}
\bibliographystyle{ieeetr}
\end{document}